\begin{document}

\title{Attack on the $\edonk$ \\ Key Encapsulation Mechanism}

\author{\IEEEauthorblockN{Matthieu Lequesne}
  \IEEEauthorblockA{
    Sorbonne Universit\'e, UPMC Univ Paris 06\\
    Inria, Team SECRET,\\
    2 rue Simone Iff, CS 42112,\\
    75589 Paris Cedex 12, France\\
    Email: matthieu.lequesne@inria.fr}
  \and
  \IEEEauthorblockN{Jean-Pierre Tillich}
  \IEEEauthorblockA{
    Inria, Team SECRET,\\
    2 rue Simone Iff, CS 42112,\\
    75589 Paris Cedex 12, France\\
    Email: jean-pierre.tillich@inria.fr}
}

\maketitle

\begin{abstract}
The key encapsulation mechanism $\edonk$ was proposed in response to the call for post-quantum cryptography standardization issued by the National Institute of Standards and Technologies (NIST). This scheme is inspired by the McEliece scheme but uses another family of codes defined over $\mathbb{F}_{2^{128}}$ instead of $\F$ and is not based on the Hamming metric. It allows significantly shorter public keys than the McEliece scheme.

In this paper, we give a polynomial time algorithm that recovers the encapsulated secret. This attack makes the scheme insecure for the intended use. We obtain this result by observing that recovering the error in the McEliece scheme corresponding to $\edonk$ can be viewed as a decoding problem for the rank-metric. We show that the code used in $\edonk$ is in fact a super-code of a Low Rank Parity Check (LRPC) code of very small rank (1 or 2). 
A suitable parity-check matrix for the super-code of 
such low rank can  be easily derived from for the public key. We then use this parity-check matrix in a decoding algorithm that was devised for LRPC codes to recover the error. Finally we explain how we decapsulate the secret once we have found the error.
\end{abstract}

\section{Introduction}

The syndrome decoding problem is a fundamental problem in complexity theory, since the original paper of Berlekamp, McEliece and van Tilborg \cite{BMT78} proving its NP-completeness for the Hamming distance. The same year, McEliece proposed a public-key cryptosystem based on this problem \cite{M78} and instantiated it with binary Goppa codes. This scheme was for a long time considered inferior to RSA due to its large key size. However, this situation has changed drastically
when it became apparent in \cite{S94a} that RSA and actually all the other public-key cryptosystems used in practice could be attacked in polynomial time
by a quantum computer. There are now small prototypes of such computers that lead to think that they will become a reality in the future and 
in 2016, the National Institute of Standards and Technology (NIST) announced a call for standardization of cryptosystems that would be safe against an adversary equiped with a quantum computer. Four families of cryptosystems are often mentioned as potential candidates: cryptosystems based on error correcting codes, lattices, hash functions and multivariate quadratic equations \cite{BBD08}. All of these are based on mathematical problems that are expected to remain hard even in the presence of a quantum computer.

The key encapsulation mechanism $\edonk$ \cite{GG17} was proposed by Gligoroski and Gj{\o}steen in response to the call issued by the NIST. This scheme is inspired by the McEliece scheme but uses another family of codes defined over $\mathbb{F}_{2^{128}}$ instead of $\F$. This choice leads to very short keys for a code-based scheme. The metric used for the decoding is not properly defined and the security relies on an ad-hoc problem named \textit{finite field vector subset ratio  problem} supposedly hard on average.

In this paper, we show that the metric used for $\edonk$ is in fact equivalent to the well-known rank metric. This metric was first introduced 
in 1951 as ``arithmetic distance'' between matrices over a field $\Fq$ \cite{H51}. The notion of rank distance and rank codes over matrices was defined in 1978 by Delsarte \cite{D78}. 
He introduced a code family, named maximum rank distance (MRD) codes, that attains the analogue of the MDS (maximum distance separable) bound for the rank metric. 
Gabidulin suggests in \cite{G85} to consider a subfamily of such codes that are linear over an extension field $\Fqm$. It provides 
a vectorial representation of these codes and allows to represent them in a much more compact way. This is the main reason why 
the rank metric based McEliece schemes achieve significantly smaller key sizes. Moreover this vectorial representation allows to view the 
known families of MRD codes as rank metric analogues of Reed-Solomon codes and to obtain an efficient decoding algorithm for them \cite{G85}.
There are also rank metric analogues for other families of codes. For instance, the Low Rank Parity-Check (LRPC) codes introduced in \cite{GMRZ13} can be considered as 
analogues of Low Density Parity-Check (LDPC) codes. Just like their binary cousins, they enjoy an efficient decoding algorithm that is based on a low rank parity-check matrix of such a code.
 
 Here, we prove that the code used in $\edonk$ is a actually a super-code of an LRPC code of rank $2$. What is more, this LRPC code is itself a subspace of codimension 1 of another LRPC code of rank 1. It turns out that parity-check matrices of rank 2 for the first super-code and rank $1$ for the second one can easily be derived from the public key. In both cases, this allows us to decode the ciphertext without the secret key. This gives a way to recover the encapsulated secret and breaks completely the $\edonk$ system.

The paper is organized as follows.
First, we recall some basic definitions and properties of rank metric and LRPC codes in Section \ref{rankmetric}. In Section \ref{edonk} we present the scheme of $\edonk$. Then we explain the general idea of our attack in section \ref{attack}. In Section \ref{reconstructing}, we detail how we reconstruct a parity-check matrix of the code and in Section \ref{decoding} how we decode the ciphertext. In Section \ref{secret}, we explain how we derive the encapsulated secret from the error. Finally in Section \ref{conclusion} we discuss the cost of this attack and its consequences.
\section{Rank metric codes}
\label{rankmetric}

\subsection{Notation}

In the following document, $q$ denotes a power of a prime number. 
In the case of $\edonk$, we will have $q=2$.
$\Fq$ denotes the finite field with $q$ elements and, for any positive integer $m$, $\Fqm$ denotes 
the finite field with $q^m$ elements. We will sometimes view $\Fqm$ as an $m$-dimensional vector space over $\Fq$.

We use bold lowercase and capital letters to denote vectors and matrices respectively.

We denote $\langle x_1, \ldots, x_k \rangle_{\mathbb{K}}$ the $\mathbb{K}$-vector space generated by the elements $\{x_1, \ldots x_k\}$.

\subsection{Definitions}

\begin{definition}[Rank metric over $\Fqm^n$]
Let $\x = (x_1, \ldots, x_n)\in\Fqm^n$ and $(\beta_1, \ldots, \beta_m)$ be 
a basis of $\Fqm$ viewed as an $m$-dimensional vector space over $\Fq$. Each coordinate $x_j \in \Fqm$ is associated to a vector of $\Fq^m$ in this basis: 
$x_j = \sum_{i=1}^mm_{i,j}\beta_i$.
 The $m \times n$ matrix associated to $\x$ is given by $\M(\x) \defeq (m_{i,j})_{1\leq i \leq m, 1 \leq j \leq n}$.

The rank weight $\w(\x)$ of $\x$ is defined as :
\[\w(\x) \defeq \mathrm{Rank\,} \M(\x).\]

The associated distance $d(\x,\y)$ between elements $\x$ and $\y$ of $\Fqm^n$ is defined by $d(\x,\y) \defeq \w(\x-\y)$.
\end{definition}

\begin{definition}[Support of a word]
Let $\x = (x_1, \ldots, x_n) \in \Fqm^n$. The support of $\x$, denoted $\mathrm{Supp}(\x)$, is the $\Fq$-subspace of $\Fqm$ generated by the coordinates of $\x$:
\[ \mathrm{Supp}(\x) \defeq \vsg{x_1, \ldots, x_n}.\]
We have $\mathrm{dim}(\mathrm{Supp}(\x))=\w(\x).$
\end{definition}

\begin{definition}[$\Fqm$-linear code]
An $\Fqm$-linear code $\C$ of dimension $k$ and length $n$ is a subspace of dimension $k$ of $\Fqm^n$.
$\C$ can be represented in two equivalent ways: by a generator matrix $\G \in \Fqm^{k \times n}$ such that $\C = \{\x \G \st \x \in \Fqm^k \}$ and by a parity-check matrix $\H \in \Fqm^{(n-k) \times n}$ such that $\C = \{\x \in \Fqm^n \st \H\transpose{\x} = \mathbf{0}_{n-k}\}$.
\end{definition}

The decoding problem in the rank metric can be described as follows.
\begin{problem}[Decoding problem for the rank metric]
Let $\C$ be an $\Fqm$-linear code of dimension $k$ and length $n$. Given $\y = \c + \e$ where $\c \in \C$ and $\e \in \Fqm^n$ is of rank weight $\leq r$ find $\c$ and $\e$.
\end{problem}

\subsection{LRPC codes}

\begin{definition}[LRPC code]
\label{lrpc}
A Low Rank Parity Check (LRPC) code of rank $d$, length $n$ and dimension $k$ over $\Fqm$ is a code that admits a parity-check matrix 
$\H = (h_{i,j}) \in \Fqm^{(n-k) \times n}$ 
such that the vector space of $\Fqm$ generated by its coefficients $h_{i,j}$ has dimension at most $d$.
\end{definition}

LRPC codes can be viewed as analogues of LDPC codes for the rank metric. In particular, they enjoy an efficient decoding algorithm based on their low rank parity-check matrix. Roughly speaking, Algorithm 1 of \cite{GMRZ13} decodes up to $d$ errors when $rd \leq n-k$ in polynomial time (see \cite[Theorem 1]{GMRZ13}).
It uses in a crucial way the notion of the linear span of a product of subspaces of $\Fqm$
\begin{definition}Let $U$ and $V$ be two $\Fq$ subspaces of $\Fqm$. We denote by $U \cdot V$ the linear span of the 
product of $U$ and $V$:
$$
U \cdot V \eqdef \vsg{uv:u \in U,\;v \in V}.
$$
\end{definition}

\section{The $\edonk$ KEM}
\label{edonk}

$\edonk$ \cite{GG17} is a key encapsulation mechanism proposed by Gligoroski and Gj{\o}steen for the NIST post-quantum cryptography call. Here we describe the key generation, encapsulation and decapsulation, omitting some details that are not relevant for the attack. We refer to \cite{GG17} for the full description.

\subsection{Parameters and notations}

The parameters for $\edonk$ are given in Table \ref{tab:parameters}. In this paper we often refer to the parameters of $\edonkref$, the reference version proposed for 128 security-bits.

\begin{table}[htbp]
  \renewcommand{\arraystretch}{1.3}
  \caption{Parameters proposed for $\edonk$}
  \label{tab:parameters}
  \centering
  \begin{tabular}{|l|c|c|c|c|c|c|}
    \hline
    Name & $m$ & $N$ & $K$ & $R$ & $\nu$ & $L$ \\
    \hline\hline
    $\boldsymbol{\mathsf{edonk128ref}}$ & \textbf{128} & \textbf{144} & \textbf{16} & \textbf{40} & \textbf{8} & \textbf{6} \\    
    \hline
    $\mathsf{edonk128K16N80nu8L6}$ & 128 & 80 & 16 & 40 & 8 & 6 \\
    \hline
    $\mathsf{edonk128K08N72nu8L8}$ & 128 & 72 & 8 & 40 & 8 & 8 \\
    \hline
    $\mathsf{edonk128K32N96nu4L4}$ & 128 & 96 & 32 & 40 & 4 & 4 \\
    \hline
    $\mathsf{edonk128K16N80nu4L6}$ & 128 & 80 & 16 & 40 & 4 & 6 \\
    \hline\hline
    $\mathsf{edonk192ref}$ & 192 & 112 & 16 & 40 & 8 & 8 \\
    \hline
    $\mathsf{edonk192K48N144nu4L4}$ & 192 & 144 & 48 & 40 & 4 & 4 \\
    \hline
    $\mathsf{edonk192K32N128nu4L6}$ & 192 & 128 & 32 & 40 & 4 & 6 \\
    \hline
    $\mathsf{edonk192K16N112nu4L8}$ & 192 & 112 & 16 & 40 & 4 & 8 \\
    \hline
  \end{tabular}
\end{table}

The scheme makes use of a hash function $\hash{\cdot}$ corresponding to standard SHA2 functions (SHA-256 or SHA-384 depending on the parameters). We will denote $\mathcal{H}^i(\cdot) \defeq \underbrace{\mathcal{H}( \ldots \mathcal{H}  (\cdot) )}_{i \text{ times}}$

Given a binary matrix $\P = (p_{i,j})$ and two non-zero elements $a \neq b$ of $\Fm$, $\P_{a,b} = (\tilde{p}_{i,j})$ denotes the matrix of the same size with coefficients in $\Fm$ where $\tilde{p}_{i,j} = a$ if $p_{i,j}=0$ and $\tilde{p}_{i,j} = b$ if $p_{i,j}=1$.

In particular, if $\P$ is orthogonal:
\begin{equation}
\label{eq:orthogonal}
\transpose{\P_{c,d}} = (\P_{a,b})^{-1} 
\end{equation}
 where $c \defeq \frac{a}{a^2 + b^2}$ and $d \defeq \frac{b}{a^2 + b^2}$.

For two vectors (or matrices) $\mathbf{x}$ and $\mathbf{y}$, we will denote $\mathbf{x} \vert\vert \mathbf{y}$ their concatenation.

\subsection{Key generation}

Given the security level and the appropriate parameters.

\begin{itemize}
\item $a, b \getrand \Fm$ non-zero elememts such that $a \neq b$.
\item $\P \getrand \F^{N \times N}$ an orthogonal matrix.
\item $\H \getrand \F^{R \times N}$ such that $\H = \transpose{[\H_T \vert\vert \H_B]}$ where $\H_B$ is an $R \times R$ orthogonal matrix and $\H_T$ has columns of even Hamming weight. 
\item $c \defeq \frac{a}{a^2 + b^2}$, $d \defeq \frac{b}{a^2 + b^2}$.
\item $\tg \getrand \Fm^{\nu}$.
\item $\V_g \defeq \Sup{\tg}$.
\item $\G \getrand \V_g^{K \times N}$ such that 
\begin{equation}
\label{eq:GH}
\G  \transpose{\H} = \mathbf{0}_{K \times R}.
\end{equation}
\item $\Gpub \defeq \G  \transpose{\P_{c,d}}$.
\item Return ($\pk \defeq \Gpub, \sk \defeq (a,b,\P,\H)$).
\end{itemize}

\subsection{Encapsulation}

Given the $\pk$ and the public parameters. 
\begin{itemize}
\item $\m \getrand \Fm^{K}$.
\item $\te \in \Fm^L$ generated as follows:
  \begin{itemize}
  \item $(\tilde{e}_{0},\tilde{e}_{1}) \getrand \Fm$;
  \item for $1 \leq i \leq \frac{L}{2}-1$, $(\tilde{e}_{2i},\tilde{e}_{2i+1}) = \hash{\tilde{e}_{2i-2} \vert\vert \tilde{e}_{2i-1}}$.
  \end{itemize}
\item $\V_e \defeq \Sup{\te}$.
\item $\e \getrand \V_e^{N}$.
\item $\c \defeq \m  \Gpub + \e$.
\item $(s_0,s_1) \defeq \hash{\tilde{e}_{L-2}\vert\vert \tilde{e}_{L-1}}$.
\item $\secret \defeq \hash{s_0 \vert\vert s_1 \vert\vert \hash{\c}}$.
\item $h \defeq \hash{s_1 \vert\vert s_o \vert\vert \hash{\c}}$.
\item $\ctxt \defeq (\c, h)$.
\item Return $(\ctxt, \secret)$.
\end{itemize}

\subsection{Decapsulation}

Given $\ctxt$, $\sk$ and the public parameters. 
\begin{itemize}
\item Recover $\e$ by decoding the $\c$ using the private matrix $\H' \defeq \H\transpose{\P_{a,b}}$.
\item Deduce $\V_e$ the vector space spaned by the coefficients of the vector $\e$.
\item For all $(\lambda, \nu) \in \V_e \times \V_e$, for $1 \leq i \leq \frac{L}{2}-1$: 
  \begin{itemize}
  \item $(s'_0, s'_1) \defeq \mathcal{H}^i\left(\lambda \vert\vert \mu \vert\vert \hash{\c}\right)$;
  \item if $\hash{s'_1 \vert\vert s'_0 \vert\vert c} = h$:
    \begin{itemize}
    \item[] Return $\secret \defeq \hash{s'_0 \vert\vert s'_1 \vert\vert \hash{\c}}$.
    \end{itemize}
  \end{itemize}
\end{itemize}
\section{Outline of the Attack on $\edonk$}
\label{attack}

Our attack is based on three observations
\begin{itemize}
\item The ciphertext is a vector $\c$ such that 
\begin{equation}
\label{eq:decoding}
\c = \m  \Gpub + \e.
\end{equation} This error $\e$ is of low rank, since its rank is at most $L$.
\item This code $\Cpub$ generated by $\Gpub$ is a subcode of an LRPC code, namely the code $\C'$ with parity-check matrix $\tH \defeq \H \transpose{\P_{a,b}}$. This code is indeed  an LRPC code of rank $2$ since all the entries of $\tH$ belong to 
$\bvsg{a,b}$. We have 
\begin{equation}
\label{eq:CpubCp}
\Cpub \subset \C'
\end{equation} since 
\begin{eqnarray*}
\Gpub \transpose{\tH} &= & \G \transpose{\P_{c,d}}\transpose{\left(\H \transpose{\P_{a,b}}\right)}\\
& = & \G \transpose{\P_{c,d}}\P_{a,b}\transpose{\H}\\
& = & \G \transpose{\H} \;\;\text{(from \eqref{eq:orthogonal})}\\
& = & \mathbf{0}_{K \times R} \;\;\text{(from \eqref{eq:GH})}.
\end{eqnarray*}
This equation also appears as Corollary 1  of  \cite[p.19]{GG17}. We have given its proof here for the convenience of the reader.
Let $K'=N-R$ be the dimension of $\C'$.
\item If we recover a parity-check matrix of rank $2$ for $\C'$ we will be able to recover $\m \Gpub$ and $\e$ from $\c$. 
Indeed, $\m \Gpub \in \C'$ and we can decode $\C'$ using a variation of Algorithm 1 of \cite{GMRZ13} and the knowledge of the parity-check matrix, provided $\w(\e) \leq L < (N-K')/2=R/2$ is verified, which is the case for the parameters of $\edonk$.
\end{itemize}

Hence we will proceed in three steps:
\begin{enumerate}
\item constructing and solving a linear system of equations to find a parity-check matrix for the code $\C'$ (detailed in Section \ref{reconstructing});
\item decoding the ciphertext using a slight variation of Algorithm 1 of \cite{GMRZ13} (see Section \ref{decoding}); 
\item recovering the secret from the error vector (explained in Section \ref{secret}).
\end{enumerate}

\section{Reconstructing the parity-check matrix}
\label{reconstructing}

\subsection{Compressed public key}

In order to reduce the public key size, the designers of $\edonk$ chose to represent the public key in a compressed form. 
They took advantage of the fact that all the coefficients of $\Gpub$ live in the vector space
$\V_{g,c,d} \defeq \bvsg{c\tilde{g}_1, \ldots, c\tilde{g}_\nu, d\tilde{g}_1, \ldots, d\tilde{g}_\nu}$ of dimension $2\nu$.
Hence, the compressed public key consists in two parts: first the basis $\tg_{c,d} \defeq (c\tilde{g}_1, \ldots, c\tilde{g}_\nu, d\tilde{g}_1, \ldots, d\tilde{g}_\nu) \in \Fm^{2\nu}$ of the vector-space $\V_{g,c,d}$, then the entries of the matrix $\Gpub$ such that each entry is represented by its coefficients in the basis $\tg_{c,d}$. For example, if an entry $x$ of $\Gpub$ is equal to $c \sum_{i=1}^{\nu} \gamma_i \tilde{g}_i + d \sum_{i=1}^{\nu} \delta_i \tilde{g}_i$ with $\gamma_i, \delta_i \in \F$, $x$ will be represented by $(\gamma_1, \ldots, \gamma_{\nu}, \delta_1, \ldots, \delta_{\nu}) \in \F^{2\nu}$. There is another subtlety in the compression that we will not mention here.

\subsection{Finding a basis}

The attacker does not have access to the value of $a$ and $b$ but can deduce the value of $ab^{-1} = cd^{-1}=(c\tilde{g}_1)(d\tilde{g}_1)^{-1}$ from $\tg_{c,d}$ as mentioned in paragraph 7.2.2 of the documentation of $\edonk$ \cite{GG17}.

Let us bring in
\[\alpha \defeq ab^{-1}.\]

We notice that $\H" \defeq b^{-1}\tH$ is also a parity-check matrix of the LRPC code $\C'$.
This matrix has all its coefficients in $\bvsg{1,\alpha}$.
We use this information to reconstruct such a parity-check matrix of the code $\C'$ by solving a linear system, similarly to what is done in \cite[Section IV B]{GRS16}. This system is derived from the following facts:
\begin{itemize}
\item[(i)]
 $\Gpub\;\transpose{\H''}=\mathbf{0}_{K \times R}$;
 \item[(ii)]
 the entries of $\H''$ belong to $\bvsg{1,\alpha}$.
 \end{itemize}
In other words, the possible rows $\x=(x_1,\dots,x_N)$ of $\H''$ are solutions  of the following system
\begin{equation}
\label{eq:linear}
\left\{ \begin{array}{lcl}
\Gpub \transpose{\x} & = & \mathbf{0}_{K}\\
x_i & \in &\bvsg{1,\alpha}\;\text{for all $i \in \{1,\dots,N\}$}.
\end{array}
\right.
\end{equation}
This system is obviously linear over $\F$ and the solution set is an $\F$-linear subspace.
A basis of this subspace can then be used as rows for $\H''$.
We now show that solving this system can be done by solving a linear system over $\F$.

\subsection{Recovering $\H''$ by solving a linear system over $\F$ and an affine system in a more general case}
\label{ss:solving}

Actually in this section we will consider a more general version of \eqref{eq:linear}. Given a system
\begin{equation}
\label{eq:affine}
\A \transpose{\x} = \transpose{\b}
\end{equation}
where $A=(a_{ij})_{{1 \leq i \leq r},{1 \leq j \leq N}}$ is a given matrix in $\Fm^{r\times N}$ and
$\b$ is a given vector in $\Fm^r$, 
and given $V$ a subspace of dimension $t$ of $\Fm$ (viewed as vector space over $\F$ of dimension $m$),
how to find the affine set of the solutions  $\x=(x_i)_{1 \leq i \leq N} \in V^N$ of the system?

We can rewrite the system \eqref{eq:affine} as
\begin{equation}
\left\{ \begin{array}{rcl}
a_{11}x_1 + \cdots + a_{1N}x_N & = & b_1\\
\cdots & = & \cdots \\
a_{r1}x_1 + \cdots + a_{rN}x_N & = & b_r.
\end{array}
\right.
\end{equation}

We introduce a basis $\{v_1,\dots,v_t\}$ of $V$ and express each unknown $x_j$ in this basis in terms of $t$ other unknowns $x_{j1},\dots,x_{jt} \in \F$:
\[ x_j = \sum_{i=1}^t x_{ji} v_i.\]

In other words, the system \eqref{eq:affine} is equivalent to
\begin{equation}
\label{eq:affine2}
\left\{ \begin{array}{rcl}
\sum_{j=1}^N\sum_{i=1}^t a_{1j} v_i x_{ji}  & = & b_1\\
\dots & = & \dots \\
\sum_{j=1}^N\sum_{i=1}^t a_{rj} v_i x_{ji} & = & b_r.
\end{array}
\right.
\end{equation}

Let $\{\beta_1, \dots, \beta_{m}\}$ be an $\F$-basis of $\Fm$, 
we introduce for $1 \leq \ell \leq m$ the projection $\pi_\ell$ from $\Fm$ to $\F$ defined by:
\begin{equation}
\pi_\ell: 
\begin{array}{ccc} 
\Fm & \longrightarrow & \F \\
a = \sum_{j=1}^{m} a_j \beta_j & \longmapsto & a_\ell.
\end{array}
\end{equation}

The $r$ equations of system \eqref{eq:affine2} defined over $\Fm$ lead to $rm$ affine equations over $\F$
by applying $\pi_\ell$ for $\ell \in \{1,\dots,m\}$: 

\begin{equation}
\left\{ \begin{array}{rcl}
\sum_{j=1}^N\sum_{i=1}^t \pi_\ell(a_{1j} v_i) x_{ji}  & = & \pi_{\ell}(b_1)\\
\dots & = & \dots \\
\sum_{j=1}^N\sum_{i=1}^t \pi_\ell(a_{rj} v_i) x_{ji} & = & \pi_\ell(b_r).
\end{array}
\right.
\end{equation}

We can solve this affine system in $\F$ to recover the solution of \eqref{eq:affine}. The system has $rm$ binary equations and $tN$ unknowns, hence a complexity of $\O(rmt^2N^2)$. 
If we apply this technique to \eqref{eq:linear}, where $t=2$ and $r=K$ we obtain a basis of the vector space in 
time $O(KmN^2)$.

\section{Decoding step}
\label{decoding}

The previous step recovers an $R \times N$ matrix $\H^{(3)}$ whose entries all belong
to $\bvsg{1,\alpha}$. The matrices $\H^{(3)}$ and $\H''$ share the property that their rows
form a basis of solutions of \eqref{eq:linear}. Therefore, there exists an $R \times R$ binary 
invertible matrix $\Q$ such that
\begin{equation}
\label{eq:H3Hsec}
\H^{(3)} = \Q\H''.
\end{equation}

We use $\H^{(3)}$ to decode and recover $\e$ from the ciphertext $\c$.
The vectors are linked by the equation
\begin{equation}
\c = \m  \Gpub + \e.
\end{equation}
We use here a slight variation of Algorithm 1 of \cite{GMRZ13} to decode. 
Algorithm 1 would consist in performing the following steps:
\begin{enumerate}
\item Compute $\transpose{\s} \defeq \H^{(3)}\transpose{\c}$ and then $V \defeq \Sup{\s}$.
Here we  typically have $V = \Sup{\e}\cdot \vsg{1,\alpha}$ when $\H^{(3)}$ is a random matrix.
\item Compute $V' \defeq V \cap \alpha^{-1}V$. This step typically recovers $\Sup{\e}$ when
$V = \Sup{\e}\cdot \vsg{1,\alpha}$. 
\item Once we have $\Sup{\e}$ we recover $\e=(e_1,\dots,e_N)$ by solving the linear equation 
$\H^{(3)}\transpose{\e}=\transpose{\s}$ with the additional constraints $e_i \in \Sup{\e}$ for 
$i \in \{1,\ldots,N\}$. This is done by using the technique given in Subsection \ref{ss:solving}.
\end{enumerate}
In our case, due to the special structure of $\H$ which contains only $a$'s and $b$'s $V$ is not 
equal to $\Sup{\e}\cdot \vsg{1,\alpha}$. This is due to the following result.

\begin{proposition}
We have for every $\e \in \Fm^N$:
\[\Sup{\H^{(3)}\transpose{\e}} \subset (1+\alpha)\Sup{\e}+\bvsg{\sum_{i=1}^N e_i}.\]
\end{proposition}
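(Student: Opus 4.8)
The plan is to reduce the statement to one about the matrix $\H''=b^{-1}\tH$ and then to read off an additive identity satisfied by its entries. First I would use \eqref{eq:H3Hsec}, i.e. $\H^{(3)}=\Q\H''$ with $\Q$ binary: every coordinate of $\Q\left(\H''\transpose{\e}\right)$ is an $\F$-linear combination of the coordinates of $\H''\transpose{\e}$, so $\Sup{\H^{(3)}\transpose{\e}}\subset\Sup{\H''\transpose{\e}}$ and it is enough to establish the inclusion with $\H''$ in place of $\H^{(3)}$.

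Next I would make the dependence on $\alpha$ of the entries of $\H''$ explicit. In characteristic $2$ the $\{a,b\}$-valued matrix decomposes as $(\P_{a,b})_{kj}=a+(a+b)P_{kj}$, so from $\H''=b^{-1}\H\transpose{\P_{a,b}}$ a direct expansion gives, for all $i,k$,
\[
\H''_{ik}=(1+\alpha)\,t_{ik}+\alpha\,s_i,\qquad s_i\defeq\sum_{j=1}^{N}H_{ij},\quad t_{ik}\defeq\sum_{j=1}^{N}H_{ij}P_{kj},
\]
where $s_i,t_{ik}\in\F$. The crucial feature is that the coefficient $s_i$ of $\alpha$ left after factoring out $(1+\alpha)$ does not depend on the column index $k$: it is the (binary) row sum of $\H$.

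It then remains to compute each coordinate of $\H''\transpose{\e}$. Writing $g_i\defeq\sum_{k=1}^{N}t_{ik}e_k$ and $E\defeq\sum_{k=1}^{N}e_k$, both of which lie in $\Sup{\e}=\bvsg{e_1,\dots,e_N}$, the expansion above yields
\[
\left(\H''\transpose{\e}\right)_i=(1+\alpha)\,g_i+\alpha\,s_i\,E.
\]
The first summand lies in $(1+\alpha)\Sup{\e}$. For the second, $s_i\in\F$ so it equals $0$ or $\alpha E$, and since $E\in\Sup{\e}$ the identity $\alpha E=(1+\alpha)E+E$ shows $\alpha E\in(1+\alpha)\Sup{\e}+\bvsg{E}$. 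Hence each coordinate of $\H''\transpose{\e}$, and therefore their $\F$-span, is contained in $(1+\alpha)\Sup{\e}+\bvsg{E}$, which is the claim.

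The step that does all the work is the row-constancy of $s_i$ in the second paragraph: it is exactly what lets one pull out the common factor $(1+\alpha)$ and collapse the naive bound $\Sup{\e}\cdot\bvsg{1,\alpha}$, of dimension up to $2\,\w(\e)$, down to $(1+\alpha)\Sup{\e}+\bvsg{E}$, of dimension at most $\w(\e)+1$. This rigidity comes from the fact that $\P_{a,b}$ is $a$ times the all-ones matrix plus $(a+b)$ times the binary matrix $\P$, so that the ``$a$-part'' contributes the same scalar to every entry of a given row of $\tH$; without this structure the entries of $\H''$ would only be known to lie in $\bvsg{1,\alpha}$ and the sharper inclusion would be false. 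Everything else is characteristic-$2$ bookkeeping.
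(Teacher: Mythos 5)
Your proof is correct, and it reaches the conclusion by a route that differs from the paper's in one meaningful respect. The first step is identical: both arguments use \eqref{eq:H3Hsec}, $\H^{(3)}=\Q\H''$ with $\Q$ binary, to reduce the claim to $\H''$ (the paper invokes invertibility of $\Q$ to get equality of supports, but, as you note, the inclusion suffices). The difference is in how the entries of $\H''$ are handled. The paper partitions each row of $\H''$ into the positions where $h''_{ij}=1$ and those where $h''_{ij}=\alpha$, and rewrites the syndrome coordinate as $\sum_{j=1}^N e_j+(1+\alpha)\sum_{j\,:\,h''_{ij}=\alpha}e_j$; this silently uses the fact that \emph{every} entry of $\H''$ equals $1$ or $\alpha$ exactly, which is strictly stronger than the property stated earlier in the paper (entries in $\bvsg{1,\alpha}$, which would also allow $0$ and $1+\alpha$, in which case the partition no longer covers all columns and the identity fails). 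That stronger fact does hold in the scheme, because every row of $\H$ has odd Hamming weight --- the columns of $\H_T$ have even weight while the columns of the orthogonal binary matrix $\H_B$ have odd weight --- but the paper never spells this out. Your decomposition $(\P_{a,b})_{kj}=a+(a+b)p_{kj}$, leading to $\H''_{ik}=(1+\alpha)t_{ik}+\alpha s_i$ with $s_i$ the parity of row $i$ of $\H$, bypasses this entirely: the argument is valid for any binary $\H$ regardless of its row parities (when $s_i=0$, so that row $i$ of $\H''$ has entries in $\{0,1+\alpha\}$, the corresponding coordinate even lies in $(1+\alpha)\Sup{\e}$ alone), and it pinpoints that the extra one-dimensional summand $\bvsg{\sum_{i}e_i}$ arises precisely from the row-constant $\alpha$-part of $\H''$. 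The two computations coincide when $s_i=1$, since then $h''_{ik}=\alpha$ exactly when $t_{ik}=0$. In short, the paper's proof is shorter but rests on an unstated structural property of the key generation; yours is self-contained and slightly more general.
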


\begin{proof}
From \eqref{eq:H3Hsec}, we deduce that
\[\Sup{\H^{(3)}\transpose{\e}} = \Sup{\Q\H''\transpose{\e}} = 
\Sup{\H''\transpose{\e}}.\]

Let $\transpose{\s} \eqdef \H''\transpose{\e}$. Denote the $i$-entry of $\s$ by $s_i$ and
the entry of $\H''$ in row $i$ and column $j$ by $h''_{ij}$. We have:
\begin{eqnarray*}
s_i & = & \sum_{j=1}^N h''_{ij}e_j\\
    & = & \sum_{j \text{ s.t. }h''_{ij}=1} e_j + \sum_{j \text{ s.t. } h''_{ij}=\alpha}\alpha e_j\\
    & = & \sum_{j=1}^N e_j + (1+\alpha) \sum_{j \text{ s.t. } h''_{ij}=\alpha} e_j.
\end{eqnarray*}
This implies the proposition.
\end{proof}

This proposition directly gives a subspace of dimension ${L+1}$ that contains
$\Sup{\e}$ since we deduce from it that
\begin{equation}
\label{eq:fundamental}
\Sup{\e} \subset (1+\alpha)^{-1} \Sup{\H'' \e}.
\end{equation}

A slight modication of Algorithm 1 of \cite{GMRZ13} yields therefore $\e$: 
\begin{enumerate}
\item compute the syndrome $\transpose{\s} \defeq \H^{(3)}\transpose{\c}$ and then $V \defeq (1+\alpha)^{-1} \Sup{\s}$;
\item The space $V$ contains $\Sup{\e}$, so we can recover $\e=(e_1,\dots,e_N)$ by solving the linear equation 
$\H^{(3)}\transpose{\e}=\transpose{\s}$ with the additional constraints $e_i \in V$ for 
$i \in \{1,\ldots,N \}$. This is done by using the technique given in Subsection \ref{ss:solving}.
\end{enumerate}
 
Note that we can also skip step 2 and directly look for $s_0$ and $s_1$ in the space $V$ of dimension $L+1$ instead of decoding exactly the value of $e$. In fact, this is what is specified in the decapsulation of $\edonk$.
\section{Recovering the shared secret}
\label{secret}

Once we have recovered the error vector $\e \in \Fqm^N$, we need to recover $s_0$ and $s_1$ to obtain the value of $\secret$. We know that the elements of $\e$ were picked randomly in $\V_e = \Sup{\te}$.We proceed just like in the decapsulation algorithm.

We generate $\Sup{\e}$ which is equal to $\V_e$ with high probability. More exactly, the probability that $\Sup{\e}$ is of dimension $< L$ is $\left(\frac{L-1}{L}\right)^{N}$. For the parameters of $\edonkref$ this probability is $2^{-37}$. In such a case, the attack might fail, but the decapsulation would fail too.

Then, among the $2^L$ elements of $\V_e$, we need to identify a couple of consecutive elements of $\te$ to deduce the secret. For all pairs of candidates $(\lambda, \mu) \in \V_e \times \V_e$, for $1 \leq i \leq \frac{L}{2}-1$ we compute $(s'_0, s'_1) \defeq \mathcal{H}^i\left(\lambda \vert\vert \mu \vert\vert \hash{\c}\right)$. If $\hash{s'_1 \vert\vert s'_0 \vert\vert c} = h$ then we have $(s'_0,s'_1) = (s_0, s_1)$. Finally we recover the secret $\secret = \hash{s_0 \vert\vert s_1 \vert\vert \c}$. In total this operation requires $\O(L2^{2L})$ operations, just like the decapsulation. This is the reason why the value of $L$ needs to remain small, otherwise the decapsulation is not possible.

\section{Concluding remarks}
\label{conclusion}

\subsection{Cost of the attack}

Let us analyze the cost of the three steps of the attack mentioned in Section \ref{attack}.

Step 1 and 2 are polynomial in terms of the parameters of the code. Step 1 only uses linear algebra operations and has a complexity at most $\O(KmN^2)$. The complexity of step 2 is given by Theorem 1 of \cite{GMRZ13} (using $n=N, k=N-R, r=L$ and $d=2$), hence is equal to $L^2(16m+N^2)$. The complexity of step 3 is $\O(L2^{2L})$. This is not polynomial in $L$ but $L$ is a very small parameter ($4 \leq L \leq 8$ in the proposal). Moreover this third step is the same as the decapsulation algorithm, so $L$ needs to stay small, otherwise the decapsulation would become too costly or even impossible. So $L$ can be considered as a constant $\leq 10$ to allow a reasonable decapsulation. Hence the most costly operation appears to be step 1.

\subsection{Without compression of the public key}

Our attack takes advantage of the compressed form of the public key that allows a direct access to the value $\alpha = ab^{-1}$. One could think that this is the origin of the attack, and decide to express the public key in its uncompressed form to fix the attack. As a consequence, the public key would be of size $K \times N \times m$ bits instead of $K \times N \times \nu$ bits in the compressed form. In practice the public key for $\edonkref$ would be  16 times longer (around 288 kbits). This inflation of the key size could be avoided by sending out a random basis of the space $\V_{g,c,d}$.

However, this is not enough. There is an even more direct way to proceed, without the value of $\alpha$. Instead of looking for a matrix $\H^{(3)}$ with entries liyng in $\bvsg{1,\alpha}$, we can use the following result. 
\begin{proposition}
There exists a full rank $(R-1) \times N$ binary matrix $\H^{(4)}$ that satisfies 
\[\Gpub \transpose{\H^{(4)}}=\mathbf{0}_{K \times (R-1)}.\]
\end{proposition}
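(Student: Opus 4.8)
The plan is to realise the rows of $\H^{(4)}$ as carefully chosen binary linear combinations of the rows of the already-reconstructed matrix $\H''$ (whose entries lie in $\bvsg{1,\alpha}$), after clearing the scalar $\alpha$. First I would record the explicit shape of $\H''$ in terms of binary matrices. Writing $\bar{\P}\defeq\mathbf{J}+\P$ for the binary complement of the orthogonal matrix $\P$, where $\mathbf{J}$ is the all-ones matrix, one checks that $b^{-1}\P_{a,b}=\P+\alpha\bar{\P}$ (the $\alpha$'s sitting at the zero positions of $\P$), and therefore, using $\transpose{\bar{\P}}=\mathbf{J}+\transpose{\P}$,
\[\H''=b^{-1}\H\transpose{\P_{a,b}}=\H\transpose{\P}+\alpha\,\H\transpose{\bar{\P}}=(1+\alpha)\,\H\transpose{\P}+\alpha\,\H\mathbf{J}.\]
Note that $\H\transpose{\P}$ is a binary matrix of rank $R$, since $\H$ has full row rank and $\P$ is invertible.

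Next, for a binary row vector $\mathbf{u}\in\F^R$ I would evaluate $\mathbf{u}\H''$. Let $\sigma\in\F$ denote the parity of the binary vector $\mathbf{u}\H$; then $\mathbf{u}\H\mathbf{J}=\sigma\mathbf{1}$, where $\mathbf{1}$ is the all-ones vector of length $N$, so that
\[\mathbf{u}\H''=(1+\alpha)\,\mathbf{u}\H\transpose{\P}+\alpha\,\sigma\,\mathbf{1}.\]
Since $\Gpub\transpose{\H''}=\mathbf{0}_{K\times R}$ by fact (i), every $\F$-linear combination $\mathbf{u}\H''$ of the rows of $\H''$ is orthogonal to $\Gpub$. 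The key point is that whenever $\sigma=0$ this identity collapses to $\mathbf{u}\H''=(1+\alpha)\,\mathbf{u}\H\transpose{\P}$, so that $\mathbf{u}\H\transpose{\P}=(1+\alpha)^{-1}\mathbf{u}\H''$ is a \emph{binary} vector which is still orthogonal to $\Gpub$.

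It then remains to count how many independent vectors of this form we obtain. The parity $\sigma$ equals $\mathbf{u}\cdot(\H\transpose{\mathbf{1}})$, i.e. the inner product of $\mathbf{u}$ with the vector of row parities of $\H$. Here I expect the main work of the proof: I would unpack the definition $\transpose{\H}=[\H_T\vert\vert\H_B]$ to show that every row of $\H$ has odd weight. Indeed each row of $\H$ is a column of $\H_T$ (of even weight by construction) concatenated with a column of $\H_B$, and since $\H_B$ is orthogonal over $\F$ its columns have odd weight (the diagonal of $\transpose{\H_B}\H_B=\mathbf{I}$ forces this). Hence $\H\transpose{\mathbf{1}}=\mathbf{1}$, so $\sigma$ is exactly the parity of $\mathbf{u}$, and the admissible vectors $\mathbf{u}$ are precisely the even-weight subspace $E\subset\F^R$, which has dimension $R-1$.

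Finally I would conclude by injectivity. The $\F$-linear map $\mathbf{u}\mapsto\mathbf{u}\H\transpose{\P}$ is injective on $\F^R$ (again because $\H$ has full row rank and $\transpose{\P}$ is invertible), so the images of a basis of $E$ are $R-1$ linearly independent binary vectors, each orthogonal to $\Gpub$. Stacking them as rows gives a full-rank matrix $\H^{(4)}\in\F^{(R-1)\times N}$ with $\Gpub\transpose{\H^{(4)}}=\mathbf{0}_{K\times(R-1)}$, as claimed. The only delicate step is the structural parity computation establishing $\H\transpose{\mathbf{1}}=\mathbf{1}$; everything else is routine linear algebra over $\F$.
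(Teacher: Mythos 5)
Your proof is correct, and the mechanism at its heart is the same one the paper uses: even-weight $\F$-linear combinations of the rows of the rank-$2$ parity-check matrix become binary after dividing out the scalar $a+b$ (equivalently $1+\alpha$, since $\H''=b^{-1}\tH$ and $1+\alpha=(a+b)b^{-1}$), while remaining orthogonal to $\Gpub$. The paper implements this by picking one explicit full-rank $(R-1)\times R$ binary matrix $\T$ with even-weight rows (a bidiagonal one) and arguing entrywise that $\T\tH$ has all entries in $\{0,a+b\}$, then setting $\H^{(4)}\defeq\frac{1}{a+b}\T\tH$; your even-weight subspace $E$ is exactly the row space of such a $\T$, and your vectors $(1+\alpha)^{-1}\mathbf{u}\H''=\frac{1}{a+b}\mathbf{u}\tH$ reproduce the same construction. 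Where you genuinely diverge is in the justification. The paper's entry-counting argument tacitly requires the entries of $\tH$ to lie exactly in $\{a,b\}$, not merely in $\bvsg{a,b}$ as established in the attack outline (as literally written, the paper's proof even uses the symbol $\H$ where $\tH$ must be meant, since the secret $\H$ is binary and $\Gpub\transpose{\H}\neq\mathbf{0}$ in general); that stronger fact is equivalent to every row of $\H$ having odd Hamming weight, which the paper never verifies. Your route, via the decomposition $\H''=(1+\alpha)\H\transpose{\P}+\alpha\H\mathbf{J}$, isolates and proves precisely this structural lemma (even-weight columns of $\H_T$, odd-weight columns of the orthogonal $\H_B$, hence $\H\transpose{\mathbf{1}}=\mathbf{1}$), and it also makes the full-rank claim explicit through injectivity of $\mathbf{u}\mapsto\mathbf{u}\H\transpose{\P}$, which the paper leaves implicit. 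So the paper's version buys brevity and an explicit matrix; yours buys rigor on the two points the paper glosses over. The only nit: before inverting $1+\alpha$ you should remark that $\alpha\neq 1$ because $a\neq b$ (the paper needs the same to divide by $a+b$).
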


\begin{proof}
Let $\T$ be a binary full-rank matrix $(R-1) \times R$ matrix that has rows of even Hamming weight. 
For instance we can choose
\[
\T = \begin{pmatrix}
1 & 1 & 0 & \cdots & 0 \\
0 & 1 & 1 & 0 & \vdots \\
\vdots & \ddots & \ddots & \ddots & \vdots\\
0 & \cdots & 0 & 1 & 1
\end{pmatrix}.
\]

We observe now that $\T \H$ has all its entries in $\{0,a+b\}$. This follows directly from the fact 
that if we sum an even number of elements in $\{a,b\}$ we either get $0$ 
(if the number of $a$'s is even, and therefore also the number of $b$'s) or
$a+b$ (if the number of $a$'s is odd).
From this, it follows immediately that
\[\H^{(4)} \defeq \frac{1}{a+b} \T \H\]
satisties the property. First, it is clear that this is a binary matrix and we also have
\begin{eqnarray*}
\Gpub \transpose{\H^{(4)}} & = & \frac{1}{a+b} \Gpub \transpose{\H} \transpose{\T}\\
& = & \mathbf{0}_{K \times (R-1)}.
\end{eqnarray*}
\end{proof}

Obtaining such a matrix $\H^{(4)}$ is straightforward. We just have to use the algorithm given in 
Section \ref{reconstructing} to recover a basis of dimension $R-1$ of binary vectors $\x$ satisfying
\[\Gpub \transpose{\x}=\mathbf{0}_{K}.\]
We then use this matrix $\H^{(4)}$ to compute the syndrome ${\s = \H^{(4)}\transpose{\c}}$. Since 
$\H^{(4)}\transpose{\c}= \H^{(4)}\transpose{\e}$ we directly obtain with very high probability that
\[\Sup{\e}=\Sup{\H^{(4)}\transpose{\c}}.\]

This reveals the support of the error and from there we can go directly to the last step of the attack to reconstruct
the shared secret.

\subsection{Security of the scheme}

Considering the attack that we described, there is a way to recover the secret of the $\edonkref$ scheme from a public key without the private key in polynomial time. In practice, the attack implemented with Sage on a personal computer recovers the secret in less than a minute, so the scheme is far from achieving the 128-bits security claimed in \cite{GG17}. Hence this scheme is insecure for the intended use. Moreover, the cost of this attack is polynomial in terms of the parameters, so there is no proper way to increase the parameters to achieve the intended security level while keeping a reasonably small key size.

\enlargethispage{-10cm} 

\bibliography{src_arxiv/codecrypto}
\bibliographystyle{alpha}


\end{document}